\documentclass[sigconf]{acmart}
\usepackage[english]{babel}

\usepackage{graphics}
\usepackage{algorithm}
\usepackage{dsfont}
\usepackage{algorithmic}
\usepackage{amssymb}
\usepackage{amsmath}
\usepackage{amsfonts}
\usepackage{epsfig}
\usepackage{url}
\usepackage{multirow}
\usepackage{xspace}
\usepackage{epstopdf}
\usepackage{graphicx}
\usepackage{color}
\usepackage{caption}
\usepackage{subcaption}
\usepackage{mdwlist}
\usepackage{tablefootnote}
\usepackage{tabularx}
\usepackage{array}
\usepackage{enumitem}
\newcolumntype{?}{!{\vrule width 2 pt}}

\hyphenation{op-tical net-works semi-conduc-tor}


\newcommand{\beq}{\begin{equation}}
\newcommand{\eeq}{\end{equation}}
\newcommand{\bit}{\begin{itemize}}
\newcommand{\eit}{\end{itemize}}

\newcommand{\hide}[1]{}
\newtheorem{problem}{Problem}

\newlength{\mfig}   
\setlength{\mfig}{1.7in}

\newlength{\smfig}   
\setlength{\smfig}{1.80in}

\newcommand{\alg}{{\bf {\tt Sub2Vec}}\xspace}
\newcommand{\randomwalk}{{\bf {\tt RandomWalk}}\xspace}

\newcommand{\nodevec}{{\bf {\tt Node2Vec}}\xspace}

\newcommand{\newman}{{\bf {\tt Newman}}\xspace}
\newcommand{\dlouvian}{{\bf {\tt Louvian}}\xspace}

\newcommand{\workplace}{{\tt WorkPlace}\xspace}
\newcommand{\highschool}{{\tt HighSchool}\xspace}

\newcommand{\dblp}{{\tt DBLP}\xspace}
\newcommand{\youtube}{{\tt Youtube}\xspace}

\newcommand{\polblogs}{{\tt PolBlogs}\xspace}
\newcommand{\texas}{{\tt Texas}\xspace}
\newcommand{\wisconsin}{{\tt Wisconsin}\xspace}
\newcommand{\washington}{{\tt Washington}\xspace}
\newcommand{\cornell}{{\tt Cornell}\xspace}
\newcommand{\facebook}{{\tt Facebook}\xspace}
\newcommand{\arxiv}{{\tt Astro-PH}\xspace}
\newcommand{\memetracker}{{\tt MemeTracker}\xspace}

 \newtheorem{informaldefinition}{Informal Definition}


\newcommand{\s}{\mathcal{S}}
\newcommand{\N}{\mathbf{M}}
\newcommand{\M}{\mathbf{M}}
\newcommand{\Sv}{\mathbf{S}}

\begin{document}

\title{Distributed Representation of Subgraphs}
\author{Bijaya Adhikari, Yao Zhang, Naren Ramakrishnan and B. Aditya Prakash}
\affiliation{\institution{Department of Computer Science, Virginia Tech}}
\email{ Email: [bijaya, yaozhang, naren, badityap]@cs.vt.edu}

\renewcommand{\shortauthors}{B. Adhikari et al.}
\begin{abstract}
Network embeddings have become very popular in learning
effective feature representations of networks. Motivated by the recent successes of 
embeddings in natural language processing, researchers have tried to find network embeddings in order to exploit machine learning algorithms for mining tasks like node classification and edge prediction. However, most of the work focuses on finding distributed representations of nodes, which
are inherently ill-suited to tasks such as community detection which are intuitively dependent on subgraphs. 

Here, we propose \alg, an unsupervised scalable algorithm to learn feature representations of arbitrary subgraphs. We provide means to characterize similarties between subgraphs and  provide theoretical analysis of \alg and demonstrate that it preserves the so-called local proximity. We also highlight the usability of \alg by leveraging it for network mining tasks, like community detection. We show that \alg gets significant gains over state-of-the-art methods and node-embedding methods. In particular, \alg offers an approach to generate a richer vocabulary of features of subgraphs to support representation and reasoning.

\end{abstract}

\date{}
\maketitle

\section{Introduction}
\label{sec:intro}

Graphs are a natural abstraction for representing relational data from multiple domains such as social networks,  protein-protein interactions networks, the World Wide Web, and so on. Analysis of such networks include classification~\cite{bhagat2011node}, link prediction~\cite{liben2007link}, detecting communities~\cite{girvan2002community, blondel2008fast}, and so on. Many of these tasks can be solved using machine learning algorithms. Unfortunately, since most machine learning algorithms require data to be represented as features, applying them to graphs is challenging due to their high dimensionality and structure. In this context, learning meaningful feature representation of graphs can help to leverage existing machine learning algorithms more widely on graph data.

Apart from classical dimensionality reduction techniques (see related work), recent works~\cite{perozzi2014deepwalk, grover2016node2vec, wang2016structural, tang2015line} have explored various ways of learning feature representation of nodes in networks exploiting relationships to vector representations in NLP (like word2vec~\cite{mikolov2013distributed}). However, application of such methods are limited to binary and muti-class node classification and edge-prediction. It is not clear how one can exploit these methods for other tasks like community detection which are inherently based on subgraphs and node embeddings result in loss of information of the subgraph structure. Embedding of subgraphs or neighborhoods themselves seem to be better suited for these tasks. Surprisingly, learning feature representation of networks themselves (subgraphs and graphs) has not gained much attention thus far. In this paper, we address this gap by  studying the problem of learning distributed representation of subgraphs.  
Our contributions are:
\begin{enumerate}
\item We propose \alg, a scalable subgraph embedding method to learn features for arbitrary subgraphs that maintains the so-called local proximity.

\item We also provide theoretical justification of network embedding using \alg, based on language modeling tools. We also propose meaningful ways to measure how similar two subgraphs are to each other.

\item We conduct multiple experiments over large diverse real datasets to show correctness, scalability, and utility of features learnt by \alg in several tasks. In particular we get upto 4x better results in tasks such as community detection compared to just node-embeddings. 

\end{enumerate}

\begin{figure*}[htb]
\begin{center}

\begin{tabular}{ccc|c}
		\includegraphics[width=1.2in]{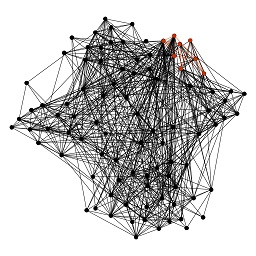} &
       \includegraphics[width=1.2in]{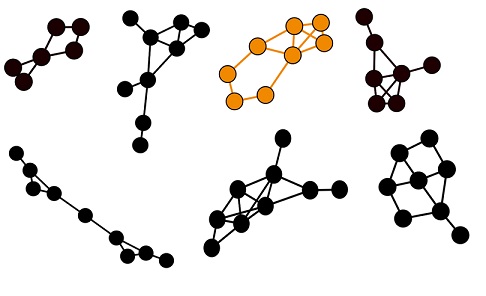} &
       \includegraphics[width=1.2in]{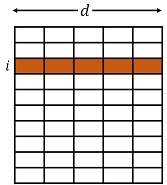} &
       \includegraphics[width=1.2in]{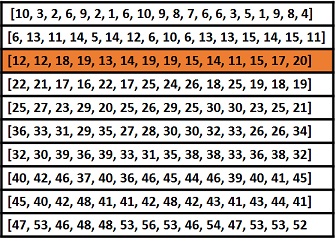} \\
      (a) A network $G$   & (b) A set, $\s$, of subgraphs of $G$  & (c) embedding learned for each subgraph  & (d)  Intermediate neighborhoods\\
      & & &  on each subgraph
    \end{tabular}
\end{center}
\caption{\textbf{An overview of our \alg. Our input  is a set of subgraphs $\s$ drawn from a network $G$. We obtain $d$ dimensional embedding of subgraphs such that we maximize the likelihood of observing intermediate neighborhoods. }}
\label{fig:intro}
\end{figure*}
 
 The rest of the paper is organized as follows: we first formulate and motivate our problem, then present \alg, discuss experiments, and finally present related work, discussion and conclusions.

\section{Problem Formulation}
\label{sec:prob}

In this paper, we are interested in embedding \emph{subgraphs} into a low dimensional continuous vector space. As shown later, the vector representation of subgraphs enables us to apply off-the-shelf machine learning algorithms directly to solve subgraph mining tasks. For example, to group subgraphs together, we can apply clustering algorithms like KMeans directly. Figure~\ref{fig:intro} (a-c) gives an illustration. Given a set of subgraphs (Figure~\ref{fig:intro} (b)) of a graph $G$ (Figure~\ref{fig:intro} (a)), we learn  a low-dimensional feature representation of each subgraph (Figure~\ref{fig:intro}(d)).

Now we are ready to formulate  our Subgraph Embedding problem. We are given a graph $G(V,E)$ where $V$ is the vertex set, and $E$ is the associated edge-set (we assume undirected graphs here, but our framework can be easily extended to directed graphs as well). We define $g_i(v_i, e_i)$ as a subgraph of $G$, where $v_i \subseteq V$ and $e_i \subseteq E$. For simplicity, we write $g_i(v_i, e_i)$ as $g_i$. As input we require a set of subgraphs  $\s = \{g_1,g_2, \dots, g_n \}$. Our goal is to embed  subgraphs in $\s$ into $d$-dimensional feature space $\mathbb{R}^d$, where $d<< |V|$. In addition, we want to ensure the subgraph proximity is well-preserved in such a $d$-dimensional space.  
In this paper, we consider to preserve the ``local neighborhood" of each subgraph $g_i$. The idea is that if two subgraphs share common structure, then their vector representations in $\mathbb{R}^d$ are close. We call such a measure \emph{Local Proximity}.  

\begin{informaldefinition}\textbf{(Local Proximity).} Given two subgraphs $g_i(v_i, e_1)$ and $g_j(v_j, e_j)$, the local proximity between $g_i$ and $g_j$  is larger if the commonly induced subgraph is larger. 
\end{informaldefinition}

Intuitively, local proximity measures how many nodes, edges, and paths are shared by two subgraphs. For illustration of the local proximity, let us consider an example. In Figure \ref{fig:toy}, suppose $g_1$, $g_2$, and $g_3$ are subgraphs induced  by nodes $\{a, b, c, e \}$ and $\{b, c, d, e\}$, and $\{d, e, f, j\}$. Since, the subgraph commonly induced by $g_1$ and $g_2$ is larger than the subgraph commonly induced by $g_1$ and $g_3$, we say $g_1$ and $g_2$ to be more ``locally proximal'' to each other than $g_1$ and $g_3$. Note that the local proximity is not just the Jaccard similarity of nodes in the two subgraphs, as it also takes the connections among the common nodes into account.

\begin{figure}[htb]
	\includegraphics[width=0.3\textwidth]{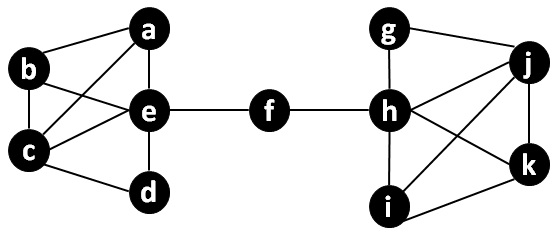}
	\caption{A toy network}
	\label{fig:toy}
\end{figure}

Having defined the local proximity of two subgraphs, we focus on learning vector representations of subgraphs such that the local proximity is preserved. Formally, our Subgraph Embedding problem is, 

\begin{problem} 
	\label{prob:embedding}
	Given a graph $G(V,E)$, $d$ and set of $\s$ subgraphs (of $G$) $\s = \{g_1,g_2, \dots, g_n \}$, learn an embedding function $f: g_i \rightarrow \mathbf{y_i} \in \mathbb{R}^d$ such that $\textbf{Local Proximity}$ among subgraphs is preserved.
\end{problem}

According to Problem~\ref{prob:embedding}, if $g_i$ and $g_j$ are closer to each other in terms of the local proximity that $g_k$ and $g_k$ then the $ \mathbf{sim} \left( f(g_i),f(g_j) \right)$ has to be greater than $ \mathbf{sim} \left( f(g_i),f(g_k) \right)$, where $\mathbf{sim}(\mathbf{x}, \mathbf{y})$ is a similarity metric between two real vectors $x$ and $y$ in $\mathbb{R}^d$. Hence, if we embed the subgraphs in Figure \ref{fig:toy} from the previous example, then a correct algorithm to solve Problem \ref{prob:embedding} has to ensure that $\mathbf{sim} \left( f(g_1),f(g_2) \right) > \mathbf{sim} \left( f(g_i),f(g_j) \right)$. We propose an efficient algorithm  for Problem \ref{prob:embedding} based on two different optimization objectives in the next section.

A natural question to ask is that if there are other metrics of subgraph similarity. Indeed, one can think of other measures of proximity, which may result in different embeddings. We will discuss this point further in Section~\ref{sec:conclusions}.

\section{Learning Feature Representations}
\label{sec:proposed}
In this section, we propose two optimization objectives for Problem~\ref{prob:embedding} and propose an unsupervised deep learning technique to optimize the objectives.
 
Mikolov et al. proposed the continuous bag of words and skip-gram models in~\cite{mikolov2013distributed}, which have been extensively used in learning continuous feature representation of words. Building on these two models, Le et al.~\cite{le2014distributed} proposed two models: the Distributed Memory of Paragraph Vector (PV-DM), and the Distributed Bag of Words version of Paragraph Vector (PV-DBOW), which can learn continuous feature representations of paragraphs and documents.

Our main idea is to pose our feature learning problem as a maximum likelihood problem by extending PV-DM and PV-DBOW  to  networks.  The direct analog is to treat each node as a word, and each subgraph as a paragraph. The edges within a subgraph can be thought as the adjacency relation of two words in a paragraph.  
PV-DBOW and PV-DM assume  that if two paragraphs share similar sequence of words, they are close in the embedded feature space. The local proximity of subgraphs naturally follows the above assumption. 
Hence, we can leverage deep learning techniques in~\cite{le2014distributed} for our subgraph embedding problem. PV-DBOW and PV-DM learn a latent representation by maximizing a distribution of word co-occurrences (using either n-gram or skip-gram model). Similarly, in this paper, we maximize a distribution of ``node neighborhood". The so-called  ``node neighborhood" is generated by subgraph-truncated random walks (see details in Section~\ref{sec:nei}). We call our models  \emph{Distributed Bag of Nodes version of Subgraph Vector }(\alg-DBON) and \emph{Distributed Memory version of Subgraph Vector} (\alg-DM) respectively. 

Next, we will introduce \alg-DM, \alg-DBON first, then study how to generate ``node neighborhood" and give a justification from matrix multiplication view. Finally, we summarize our algorithm \alg.

\subsection{\alg-DM}
\label{sub:svdm}

In the \alg-DM model, we seek to predict a node $u$ given other nodes in $u$'s neighborhoods and the subgraph $u$ belongs to.  Consider the subgraph $g_1$ (a subgraph induced by nodes $\{a, b, c, e\}$)  in Figure \ref{fig:toy}. Suppose the sequence of nodes returned by random walks in $g_1$ is $[ a, b, c]$, and we consider neighborhood of distance 2, then the model asks to predict node $c$ given subgraph $g_1$, and its predecessors ($a$ and $b$), i.e., $\Pr(c | g_1, \{a, b\})$.
 
 More precisely, given a $G'(V', E')$ as the union graph of all the subgraphs in $\s = \{g_1, g_2, \dots , g_n\}$, where $V'=\bigcup_i {v_i}$ and $E'=\bigcup_i {e_i}$, consider a function $m$: $V'\rightarrow \mathbb{R}^d$ ($m(n) = \mathbf{x} $). 
 We define $\N$ as a $d\times |V'|$ node vector matrix, where each column is $m(n)$ (the vector representation of nodes $n \in V'$). 
  Similarly, we define function $f(g_i)$ as the embedding function for subgraph $g_i$, where $f(g_i)$ is a $d$-dimensional vector. We denote $\Sv$ as the  subgraph matrix, where each column is $f(g_i)$ for all subgraphs in $\s$.
  The matrices $\N$ and $\Sv$ are indexed by node and subgraph ids. In \alg-DM, we use the node and subgraph vectors to predict the next node in the neighborhood $N_n$. We assume $N_n$ is given, and will discuss  $N_n$ later in Section~\ref{sec:nei}.  

Now, given a node $n$ and its neighborhood $N_n$ and the subgraph $g_i$ from which the $N_n$ is drawn, the objective of \alg-DM is to maximize the following:

\begin{equation}
\max_{f}  \sum_{g_i \in \s} \sum_{n \in g_i} \log( \Pr(n | m(N_n), f(g_i)),
\end{equation}
where $\Pr(n | m(N_n), f(g_i))$ is the probability of predicting node $n$ in $g_i$ given the vector representations of its neighborhood $m(N_n)$ and the subgraph from which the node and its neighborhood is drawn, $f(g_i)$.  Note that for ease of description, we extend the function $m$ from a node to a node set (neighborhood $N_n$). $\Pr(n | m(N_n), f(g_i))$  is defined using the softmax function:

\begin{equation}
\label{svdbowprob}
\Pr(n | m(N_n), f(g_i) )= \frac{e^{\mathbf{U}_n \cdot h(m(N_n),f(g_i))}}{\sum_{v \in V} e^{\mathbf{U}_v \cdot h(m(N_n),f(g_i))}}
\end{equation}
where matrix $\mathbf{U}$ is a softmax parameter and $h(\mathbf{x},\mathbf{y})$ is average or concatanation of vectors $\mathbf{x}$ and $\mathbf{y}$~\cite{le2014distributed}. In practice, to compute Equation~\ref{svdbowprob}, hierarchical softmax is used~\cite{mikolov2013distributed}.

\subsection{\alg-DBON}

In the \alg-DBON model,  we want to predict the nodes in the subgraph given only the subgraph vector $f(g_i)$. 
For example, consider the same example in Section~\ref{sub:svdm}: the subgraph $g_1$  in Figure \ref{fig:toy}, and the node sequence $[a,b,c]$ generated by random walks. Now, in the \alg-DBON model the goal is to predict the neighborhood $\{a, b, c\}$ given the subgraph $g_1$. This model is parallel to the popular skip-gram model.

Formally, given a subgraph $g_i$, and neighborhood $N$ drawn from $g_i$, the objective of \alg-DBON is the following:
\begin{equation}
\max_{f}  \sum_{g_i \in \s} \sum_{N \in g_i} \log( \Pr(N | f(g_i)),
\end{equation}
where $\Pr(N | f(g_i)$ is also a softmax function, i.e., 
\begin{equation}
\label{softmax}
\Pr(N | f(g_i) = \frac{e^ {m(N). f(g_i)}}{\sum_{N \in G} e^{ m(N). f(g_i)}},
\end{equation}

Since computing Equation \ref{softmax} involves summation over all possible neighborhoods, we use negative sampling to optimize it.
The negative sampling objective is as follows:

{
\small
\begin{equation}
\label{obj:ns}
\mathcal{L} = \sum_{g_i \in \s} \sum_{c \in g_i} \#(g_i, c) \log( \sigma( g(c) \cdot f(g_i)) + k  \mathrm{E}_{c_N ~ P}[\log( \sigma( -  g(c_N) \cdot f(g_i))]
\end{equation}
}
where $k$ is a parameter for negative sampling, $c$ is a context generated by random walks, and $\sigma(x) = \frac{1}{1+ e^{-x}}$.

\subsection{Subgraph Truncated Random Walks}
\label{sec:nei}
Our problem seeks to preserve the local proximity between subgraph in $\s$. As mentioned in Section~\ref{sec:prob}, intuitively the local proximity measures how many nodes, edges, and paths are shared by two subgraphs. However, quantify local proximity is challenging.
A possible way to measure the local proximity between two subgraphs $g_i$ and $g_j$, would be to look at their neighborhoods, and compare every neighborhood in $g_i$ with every neighborhood in $g_j$. However, it is not feasible as we have a large number of neighborhoods. Another approach to measure local proximity is that we can enumerate all possible paths in each subgraphs. However, there are exponential number of paths in each subgraphs. To bypass these challenges, we resort to random walks to implement the local proximity. 

Given a set of subgraphs $\s = \{g_1,g_2, \dots, g_n \}$, we generate neighborhood in each $g_i \in \s$ by fixed length subgraph-truncated random walks. Specifically, for a subgraph $g_i$, we choose a node $v_1$ from nodes in $g_i$ uniformly at random. Next we generate a sequence of nodes  $v_1, v_2, v_3 \dots v_k$ to get a random walk of length $k$, where $v_j$ is a node chosen from the neighbors of node $v_{j-1}$ uniformly at random. We repeat the process for each subgraph in $\s$. Overlaps in the random walks of $g_i$ and $g_j$ serve as a metric for local proximity. The intuition is that if the subgraph commonly induced by $g_i$ and $g_j$ is large, then we have more overlaps in their random walks.

Apart from being tractable in capturing the notion of local proximity between subgraphs, random walks have other advantages. First, the notion of neighborhood in other data types, such as texts, is naturally defined due to the sequential nature of text data. However, graphs are not sequential, hence  it is more challenging to define the neighborhoods of subgraphs. Random walks help sequentialize subgraphs. Moreover, 
random walks generate meaningful sequences, for example, the frequency of nodes in random walk follows power law distribution \cite{perozzi2014deepwalk}.

\subsection{Matrix Multiplication based Justification of our Model}
Here we demonstrate that optimizing the objective function of SV-DBON with negative sampling preserves the local proximity of subgraphs. Leveraging the idea in~\cite{levy2014neural}, we can write Equation \ref{obj:ns}
as a factorization of matrix $\mathbf{M}$,  where  each element $M_{ij}$ corresponds to subgraph $i$ and context $j$:

\begin{equation}
M_{ij} = \log(\frac{\#(\text{context j in subgraph i})}{\#(\text{context j in D})}) + \log(\frac{|D| \cdot w}{k \cdot l}),
\end{equation}

$k$ is a negative sampling parameter, $w$ is a window size of context, and $l$ is a length of a random walk in each subgraph.  
Note that if  subgraph $i$ in $D$ has contexts $j$ that is never observed,  then in $\mathbf{M}$, $M_{ij} = \log(0) = - \infty$. A common practice in NLP is to replace $\N$ with $\M^{0}$ where, $\M^{0} = 0$ if  $\#(\text{context j in subgraph i}) = 0$. 

Suppose $\M^{0}_{a}$ is the a-th row in matrix $\M^{0}$, and $\M_a \cdot \M_b$ is a dot-product. Now, we have the following lemma.

\begin{lemma}
\label{lemma:factorization}
Assuming random walks in subgraphs $g_a$ and $g_b$ visit every path of size $w$ at least once, then
\begin{equation}
\M^{0}_{a} \cdot \M^{0}_{b} \geq x \log^2(\frac{|D|w}{|\s|kl}) ,
\end{equation}
where $S$ is set of input subgraphs in the data, $D$ is the set of all the subgraph-context pairs observed ,and $x$ is the number of overlapping paths of length $w$ in subgraphs $g_a$ and $g_b$.
\end{lemma}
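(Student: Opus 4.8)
The plan is to expand the inner product coordinate-by-coordinate, identify exactly which coordinates survive the truncation to $\M^{0}$, show that each surviving coordinate already contributes at least $\log^2\!\left(\frac{|D|w}{|\s|kl}\right)$, and then use the hypothesis to guarantee at least $x$ surviving coordinates. Concretely, write $\M^{0}_{a}\cdot\M^{0}_{b}=\sum_{j}M^{0}_{aj}M^{0}_{bj}$ and ask which terms are nonzero. By the definition of $\M^{0}$, $M^{0}_{aj}\neq 0$ forces $\#(\text{context }j\text{ in subgraph }a)\ge 1$, and likewise $M^{0}_{bj}\neq 0$ forces $\#(\text{context }j\text{ in subgraph }b)\ge 1$; so a term survives precisely when the length-$w$ context (path) $j$ occurs in the random walk of \emph{both} $g_a$ and $g_b$. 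This is where the hypothesis is used: since the walks in $g_a$ and $g_b$ visit every length-$w$ path at least once, each of the $x$ length-$w$ paths that is induced in both $g_a$ and $g_b$ (the ``overlapping'' paths) is realized by both walks and hence produces a nonzero summand, so the sum has at least $x$ nonzero terms.

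Next I would lower-bound a single nonzero term. For such a $j$ we have $\#(\text{context }j\text{ in subgraph }a)\ge 1$, while $\#(\text{context }j\text{ in }D)\le|\s|$ because a fixed context is paired with at most $|\s|$ subgraphs in $D$. Hence
\[
M^{0}_{aj}=\log\frac{\#(\text{context }j\text{ in subgraph }a)}{\#(\text{context }j\text{ in }D)}+\log\frac{|D|w}{kl}\ \ge\ \log\frac{1}{|\s|}+\log\frac{|D|w}{kl}=\log\frac{|D|w}{|\s|kl},
\]
and symmetrically $M^{0}_{bj}\ge\log\frac{|D|w}{|\s|kl}$. Assuming $\log\frac{|D|w}{|\s|kl}\ge 0$ (which holds in the regime of interest, where $|D|$ dominates the small parameters $w,k,l$ and $|\s|$), both factors are nonnegative and at least $\log\frac{|D|w}{|\s|kl}$, so $M^{0}_{aj}M^{0}_{bj}\ge\log^{2}\frac{|D|w}{|\s|kl}$. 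The same estimate shows that \emph{every} nonzero term of the sum is $\ge\log^{2}\frac{|D|w}{|\s|kl}\ge 0$, so keeping all of them only helps.

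Finally I would assemble the pieces: the sum $\sum_{j}M^{0}_{aj}M^{0}_{bj}$ has at least $x$ nonzero terms, each at least $\log^{2}\frac{|D|w}{|\s|kl}$, and the remaining terms are nonnegative, which yields $\M^{0}_{a}\cdot\M^{0}_{b}\ge x\log^{2}\frac{|D|w}{|\s|kl}$, as claimed.

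I expect the main obstacle to be making the two bookkeeping steps airtight rather than any heavy computation. First, the ``visits every length-$w$ path at least once'' hypothesis has to be invoked at exactly the right place: without it the $x$ structurally shared paths need not be realized by the sampled walks, so the count of nonzero summands could drop below $x$; I would state explicitly that this hypothesis is what converts structural overlap into at-least-$x$ nonzero coordinates. Second, the sign of $\log\frac{|D|w}{|\s|kl}$ matters, since passing from the two coordinatewise lower bounds to a bound on their product requires the common lower bound to be nonnegative; I would either record this as a standing assumption on the parameters or phrase the argument so the nonnegativity is used transparently.
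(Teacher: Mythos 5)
Your overall strategy is the same as the paper's: expand $\M^{0}_{a}\cdot\M^{0}_{b}$ coordinate-wise, use the ``every length-$w$ path is visited'' hypothesis to guarantee at least $x$ coordinates in which both entries are nonzero, lower-bound the numerator count by $1$, upper-bound the denominator count, and multiply. You are in fact more careful than the paper on one point: you explicitly record that the common lower bound on each log factor must be nonnegative before it can be squared to bound the product, an assumption the paper's proof uses silently.

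The one step that does not hold up is the bound $\#(\text{context }j\text{ in }D)\le|\s|$. In the factorization being invoked (following Levy and Goldberg), $\#(j,D)$ counts \emph{occurrences} of context $j$ in the corpus of walks, i.e.\ $\sum_i \#(j,i)$, and a single walk of length $l$ contributes up to $l-w+1$ occurrences of a given length-$w$ window; so the correct worst case is $|\s|\,(l-w+1)$, not $|\s|$. Your justification (``a fixed context is paired with at most $|\s|$ subgraphs in $D$'') bounds the number of \emph{distinct} subgraphs containing $j$, which is a different quantity. The paper's proof uses $\#(j,D)\le|\s|(l-w+1)\le|\s|l$ and consequently arrives at $x\log^{2}\bigl(\tfrac{|D|w}{|\s|kl^{2}}\bigr)$ --- note the $l^{2}$ --- which does not literally match the $x\log^{2}\bigl(\tfrac{|D|w}{|\s|kl}\bigr)$ in the lemma statement. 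So your argument reproduces the stated constant only by undercounting $\#(j,D)$; with the correct count you would land where the paper's own derivation does, with an extra factor of $l$ in the denominator.
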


\begin{proof}
Now, by the definition of dot product, we have the following:
\begin{equation}
\M^{0}_{a} \cdot \M^{0}_{b} = \sum_{j=1}^C \left[ \log \left(\frac{\#(\text{j, a})\cdot |D|\cdot w} {\#(\text{j, D}) \cdot k \cdot l} \right) \right] \left[ \log \left(\frac{\#(\text{ j, b})\cdot |D|\cdot w} {\#(\text{j, D}) \cdot k \cdot l} \right) \right],
\end{equation}
where $\#(\text{j, a})$ is the number of times context $j$ appears in subgraph $g_a$.

Now, we know that maximum value of $\#(\text{j, D})$ is ${N \cdot (l - w + 1 )}$ when random walk produces only context $j$. And the minimum value of $\#(\text{j, a})$ is $1$, as the random walk visits each path in the subgraph if it exists. Now, summing only over non-zero entries.

\begin{equation}
\M^{0}_{a} \cdot \M^{0}_{b} \geq \sum_{j \in {\#(\text(j,a)) \neq 0 , \#(\text(j,b)) \neq 0}} \left[ \log^2 \left(\frac{|D|\cdot w} { N \cdot k \cdot l (l-w+1)} \right) \right]
\end{equation}

Now using the fact that $l \geq (l -w +1)$ for any $w < l$ and that there are exactly $x$ non-zero entries in the summation, we get

\begin{equation}
\M^{0}_{a} \cdot \M^{0}_{b} \geq x  \log ^2 \left(\frac{|D|\cdot w} { N \cdot k \cdot l^2} \right)
\end{equation}
\end{proof}

Lemma \ref{lemma:factorization} shows that as the number of overlapping paths increases, the lower bound of any $\M^{0}_{a} \cdot \M^{0}_{b}$ (corresponding to subgraphs $g_a$ and $g_b$) increases as well. Since optimizing $\alg$'s objective is closely related to the factorization of matrix $\M^0$, we can expect the embedding of subgraphs with higher overlaps to be closer to each other in the feature space. Hence, \alg preserves the local proximity.

\subsection{Algorithm}

\begin{algorithm}
\caption{\alg}
\label{alg:alg}
\begin{algorithmic} [1]
\REQUIRE Graph $G$, subgraph set $\s = \{g_1,g_2, \dots, g_n \}$, length of the context window $w$, dimension $d$

\STATE walkSet = $\{ \}$ 
\FOR{\textbf{each} $g_i$ in $s$}
	\STATE walk = \randomwalk($g_i$)
	\STATE walkSet[$g_i$] = walk
\ENDFOR
\STATE f = StochasticGradientDescent(walkSet, $d$, $w$) 
\RETURN f
\end{algorithmic}
\end{algorithm}

\begin{algorithm}
\caption{\alg : StochasticGradientDescent(walkSet, $d$, $w$) }
\label{alg:alg2}
\begin{algorithmic} [1]
\STATE randomly intialize features $f$
\FOR{\textbf{each} walk $i$ in walkset}
\FOR{\textbf{each} randomly sampled Neighborhood $N$ in walk $i$}
 \STATE Compute $\mathcal{L}(f)$ based in SV-DM or SV-DBON objective
\STATE $f$ = $f - \eta \times \nabla \mathcal{L}(f$
\ENDFOR
\ENDFOR
\end{algorithmic}
\end{algorithm}

In our algorithm, we first generate the neighborhood in each subgraph by running random walk. We then
learn the vector representation of the subgraphs based on the random walks generated on each subgraph. Then stochastic gradient descent is used to optimize SV-DBON/ SV-DM objectives. The complete pseudocode is presented in Algorithms  \ref{alg:alg} and \ref{alg:alg2}.

\section{Experiments}
\label{sec:experiments}
We briefly describe our set-up next. All experiments are conducted  using
a 4 Xeon E7-4850 CPU with 512GB 1066Mhz RAM. We set the length of the random walk as 1000 and following literature \cite{grover2016node2vec}, we set dimension of the embedding as 128 unless mentioned otherwise. The code was implemented in Python and we will release it for research purposes. 
We answer the following questions in our experiments: 
\begin{enumerate}[wide, labelwidth=!, labelindent=0pt, itemsep=0pt,nolistsep]
\item[Q1.] Are the embeddings learnt by \alg useful for community detection? 
\item[Q2.] Are the embeddings learnt by \alg effective for link prediction?
\item[Q3.] How scalable is \alg for large networks? 
\item[Q4.] Do parameter variations in \alg lead to overfitting? 
\item[Q5.] Are the representations learnt by \alg  meaningful?
\end{enumerate}

\par \noindent
\textbf{Datasets.} We run \alg on multiple real world datasets from multiple domains like social-interactions, co-authorship, social networks and so on of varying sizes. See Table~\ref{tab:datasets}.
\begin{enumerate}[wide, labelwidth=!, labelindent=0pt, itemsep=0pt,nolistsep]
\item  \workplace is a publicly available social contact network between employees of a company with five departments\footnote{\protect\label{note1}http://www.sociopatterns.org/}\!. Edges indicate that two people were in proximity of each other. 
 
 \item \highschool is a social contact network\textsuperscript{\normalfont\ref{note1}}\!. Nodes are high school students belonging to one of five different sections and edges indicate that two students were in vicinity of each other.  

\item \texas, \cornell, \washington, \wisconsin are networks from the WebKB dataset\footnote{http://linqs.cs.umd.edu/projects/projects/lbc/}\!. These are networks of webpages and hyperlinks.   

\item \polblogs is a directed network of hyperlinks between weblogs on US politics, recorded in 2005. 

\item \arxiv and \dblp are coauthorship networks from Arxiv High-energy Physics and DBLP bibliographies respectively, where two authors have an edge if they have co-authored a paper. 

\item \facebook \cite{leskovec2012learning} is an anonymized social network where nodes are Facebook users  and edges indicate that two users are friends.

\item \youtube is a social network, where edges indicate friendship between two users. 

\end{enumerate}

{
\small
\begin{table}

\centering
\caption{\label{tab:datasets}Datasets Information.}
\begin{tabular}{|c|c|c|c|}
	\hline
 \textbf{Dataset}  & \textbf{$|V|$} & \textbf{$|E|$}  & \textbf{Domain} \\ \hline

\workplace \cite{NWS:9950811}  & 92 & 757 & contact  \\ \hline
\cornell \cite{sen:aimag08} & 195 & 304 & web\\ \hline
\highschool \cite{10.1371/journal.pone.0107878} &  182 & 2221 & contact\\ \hline
\texas \cite{sen:aimag08}  & 187 & 328 & web\\ \hline
\washington \cite{sen:aimag08}   & 230 & 446  & web\\ \hline
\wisconsin  \cite{sen:aimag08} & 265 & 530 & web\\ \hline
\polblogs  \cite{adamic2005political}& 1490 & 16783 & web\\ \hline
\facebook \cite{leskovec2012learning}  & 4039 & 88234 & social-network\\ \hline
\arxiv \cite{leskovec2007graph}  & 18722 & 199110 & co-author\\ \hline
\dblp \cite{yang2015defining} &  317k & 1.04 M & co-author \\ \hline
\youtube \cite{yang2015defining} &  1.13M & 2.97M & social\\ \hline

\end{tabular}

\end{table}
}

\subsection{Community Detection}

\par \noindent 
\textbf{Setup.} Here we show how to leverage \alg for the well-known community detection problem. A community of nodes in a network is a coherent group of nodes which are roughly densely connected among themselves and sparsely connected with the rest of the network. As nodes in a community are densely connected to each other, we expect neighboring nodes in the same community to have a similar surrounding. We know that \alg embeds subgraphs while preserving local proximity. Therefore, intuitively we can use features generated by \alg to detect communities. 

Specifically, we propose to solve the community detection problem using \alg by embedding the surrounding neighborhood of each node. First, we extract the neighborhood $C_v$ of each node $v \in V$ from the input graph $G(V,E)$. Then we run \alg on $\s = \{C_v | v \in V \}$ to learn feature representation of $f(C_v)$ for all $C_v \in \s$. We then use a simple clustering algorithm (K-Means) to cluster the feature vectors $f(C_v)$ of all ego-nets. Cluster membership of ego-nets determines the community membership of the ego.   The complete pseudocode is in Algorithm \ref{alg:com}.

\begin{algorithm}
\caption{Community Detection using \alg}
\label{alg:com}
\begin{algorithmic} [1]
\REQUIRE A network $G(V,E)$, \alg parameters, $k$ number of communities
\STATE neighborhoodSet = $\{ \}$ 
\FOR{\textbf{each} $v$ in $V$}
	\STATE neighborhoodSet = neighborhoodSet $\cup$ neighbordhood of $v$ in $G$.
\ENDFOR
\STATE vecs = \alg(neighborhoodSet, $w$, $d$)
\STATE clusters = K-Means(vecs, $k$)
\STATE return clusters
\end{algorithmic}
\end{algorithm}

In Algorithm \ref{alg:com},  we define neighborhood of each node to be its ego-network for dense networks (\highschool and \workplace) and 2-hop ego-networks for sparse networks. The ego-network of a node is the subgraph induced by the node and its neighbors.  Similarly,  the 2-hop ego-network of a node is defined as the subgraph induced by the node, its neighbors, and neighbors' neighbors. 

We compare \alg with various traditional community detection algorithms and network embedding based methods. \newman \cite{girvan2002community} is a community detection algorithm based on betweenness. It is a greedy agglomerative hierarchical clustering algorithm. \dlouvian \cite{blondel2008fast} is a greedy optimization method. \nodevec is a network embedding method which learns feature representation of nodes in the network which we then cluster to obtain communities.

We run \alg and baselines on the following networks with ground truth communities and compute Precision, Recall, and F-1 score to evaluate all the methods. 

\begin{enumerate}[wide, labelwidth=!, labelindent=0pt, itemsep=0pt]
\item  \workplace: Each department as a ground truth community. 
 
 \item \highschool: Each section as a ground truth community. 

\item \texas, \cornell, \washington: Each webpage belongs to one of five classes: course, faculty, student, project, and staff, which serve as ground-truth.

\item \polblogs: Conservative and liberal blogs as ground-truth communities. 

\end{enumerate}

{\footnotesize
\begin{table*}[htb]
\centering
\caption{\alg easily out-performs all baselines in all datasets. Precision \textbf{P}, Recall \textbf{R}, and \textbf{F-1} score, of various algorithms for community detection. Winners in F-1 score have been bolded for each dataset.}
\label{tab:com}
\begin{tabular}{|c ? c|c|c ? c|c|c ?c|c|c ? c|c|c ? c|c|c ? c|c|c ? c|c|c ?}
\hline
 & 
\multicolumn{3}{ c ?}{\workplace} &\multicolumn{3}{ c ? }{\highschool} &\multicolumn{3}{ c?}{\polblogs} & \multicolumn{3}{ c?}{\texas} & \multicolumn{3}{c?}{\cornell} & 
\multicolumn{3}{c?}{\washington} & \multicolumn{3}{c?}{\wisconsin} \\ \hline
\textbf{Method} &
 \textbf{P} & \textbf{R} & \textbf{F-1} & \textbf{P} & \textbf{R} & \textbf{F-1} & \textbf{P} & \textbf{R} & \textbf{F-1} & \textbf{P} & \textbf{R} & \textbf{F-1} & \textbf{P} & \textbf{R} & \textbf{F-1} &
 \textbf{P} & \textbf{R} & \textbf{F-1} & \textbf{P} & \textbf{R} & \textbf{F-1} \\ \hline
  
  \newman &  0.26 & 0.27 & 0.27 & 0.23 & 0.32 & 0.27 &  0.67  & 0.64  & 0.66 & 0.43 & 0.15 & 0.22  & 0.38   & 0.25  & 0.30 &  0.32 & 0.87  & 0.47 & 0.35  & 0.13 & 0.19 \\\hline
  
  \dlouvian &  0.57 & 0.04 & 0.07 & 0.49 & 0.04 & 0.08 & 0.91  & 0.83 & 0.87 & 0.54 & 0.14 & 0.23 & 0.36  & 0.15  & 0.22 & 0.45  & 0.1 & 0.16 & 0.40  & 0.12 & 0.19  \\ \hline
  
    \nodevec & 0.26 & 0.21 & 0.23 & 0.21 & 0.22 & 0.22 & 0.92  & 0.92 &  0.92 &  0.41 & 0.63 & 0.50 & 0.30  & 0.36 & 0.33  & 0.37 & 0.45  & 0.40  & 0.34 & 0.24  & 0.29  \\\hline
 
  \alg DM &  0.87 & 0.69 & \textbf{0.77} &  0.95 & 0.95 & \textbf{0.95} &  0.92 & 0.93 &  \textbf{0.93} & 0.49  & 0.57 & \textbf{0.53} & 0.34  & 0.47  & 0.39  & 0.45 & 0.64  & \textbf{0.53}  & 0.40 & 0.42  & \textbf{0.41}  \\\hline
 
  \alg DBON &  0.86 & 0.67 & 0.77 &  0.94 & 0.94  & 0.94 &   0.92 & 0.92 &  0.92 & 0.44  & 0.59 & 0.51 & 0.31 & 0.55  & \textbf{0.40}  & 0.43 & 0.66  & 0.52  & 0.35 & 0.41  & 0.38  \\\hline
\end{tabular}
\end{table*}
}

\par \noindent
\textbf{Results.} 
See Table \ref{tab:com}. Both versions of \alg significantly and consistently outperform all the baselines (\emph{upto a factor of \textbf{4} times against closest competitor, \nodevec}). We do better than \nodevec because intuitively, we learn the feature vector of the neighborhood of each node for the community detection task; while \nodevec just does random probes of the neighborhood. Precision for \dlouvian is high in dense networks as it outputs small communities and recall is consistently poor across all datasets for the same reason, while  for \newman the performance is not consistent. Performance of \nodevec is satisfactory in the sparse networks like \polblogs and \texas, but it is significantly worse for dense networks like \workplace and \highschool. On the other hand, performance of \alg is even more impressive in these networks. 

In Figure \ref{fig:comDec}, we plot the community structure of the \highschool dataset.  In the \highschool dataset, we consider five sections as the ground truth community. In the figure, the color of nodes indicate the community membership. The figure highlights the superiority of \alg compared to \nodevec. The communities discovered by \alg matches the ground truth very closely, while those discovered by \nodevec appear to be \emph{near random}.

\begin{figure*}[htb]
\begin{center}

\begin{tabularx}{\textwidth}{>{\centering\arraybackslash}X>{\centering\arraybackslash}X>{\centering\arraybackslash}X}
		\includegraphics[width=2.0in]{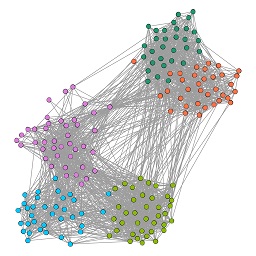} &
       \includegraphics[width=2.0in]{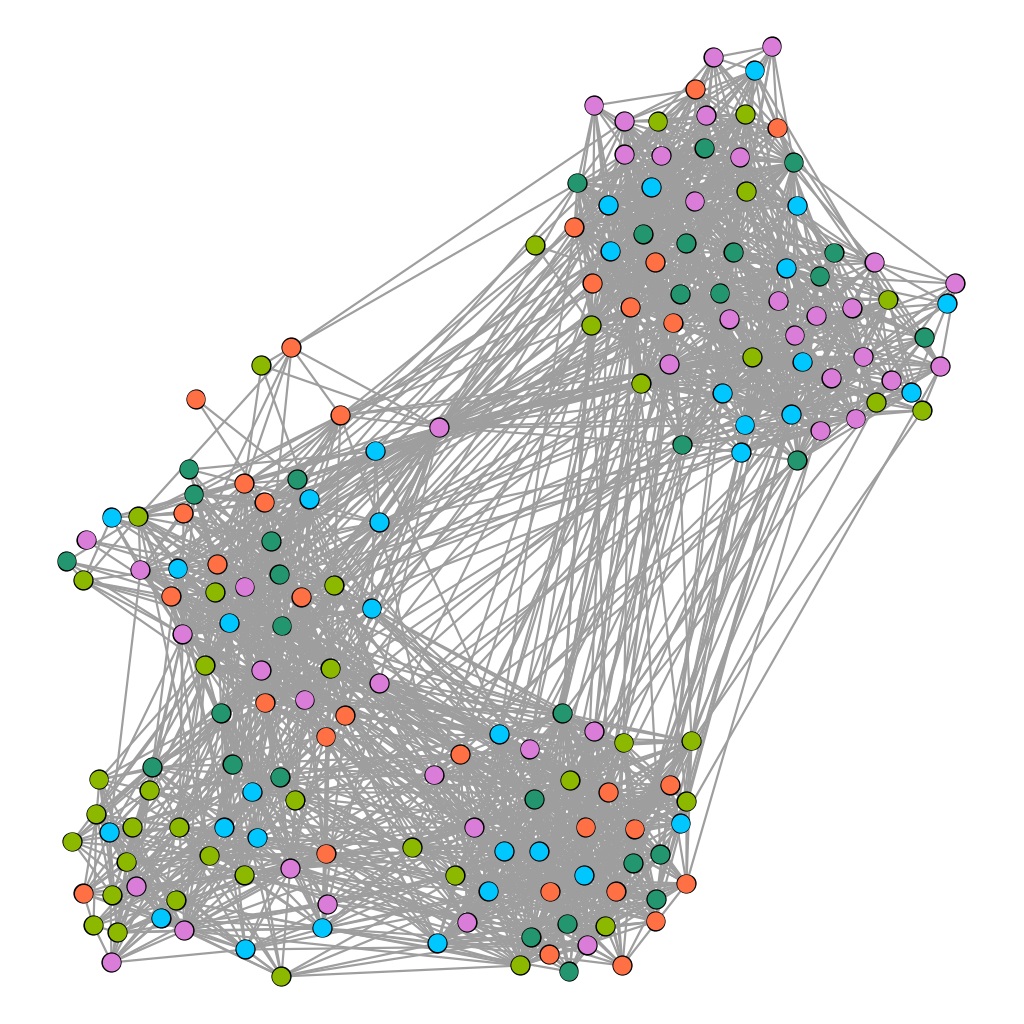} &
       \includegraphics[width=2.0in]{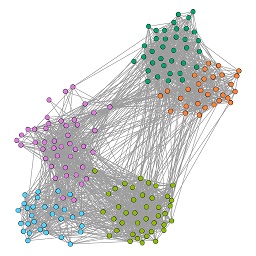} \\
      (a) Ground Truth   & (b) Result of \textsc{node2vec} & (c) Result of \alg
    \end{tabularx}
\end{center}
\caption{\textbf{Visualization of community detection in dense \highschool network. Communities obtained by clustering ego-nets vectors returned by \alg matches the ground truth, while the result from \nodevec appears to be random.}}
\label{fig:comDec}
\end{figure*}

\subsection{Link Prediction}

\par \noindent 
\textbf{Setup.}
In this section, we focus on the Link Prediction problem. Given a network $G(V,E)$, the link prediction problem asks to predict the likelihood of formation of an edge between two nodes $v_1 \in V$ and $v_2 \in V$, such that $(v_1,v_2) \notin E$. It is well known that nodes with common neighbors tend to form future links \cite{liben2007link}. For example, in a social network two individuals who have multiple friends in common have higher chances of eventually forming a friendship.  
It is evident from the example that likelihood of future edges depends on the similarity of neighborhood around each end-point. Hence we propose exploiting the embeddings of ego-nets of each node obtained from \alg to predict whether two nodes will form an edge.
 
Specifically, we first hide a $P$ percentage of edges randomly sampled from the network, while ensuring that the remaining network remains connected. We consider these ``hidden'' edges as the ground truth. Then we extract the ego-network, $C_v$, for each node $v \in V$. We then run \alg on $\s = \{C_v | v \in V \}$ and use the resulting embedding to predict link. Following methodology in literature~\cite{wang2016structural}, to evaluate our method, we calculate the Mean Average Precision (MAP). To calculate MAP first we compute Precision@K, as  $\text{Precision@k}(v)= \frac{\sum_{i < k} \mathds{1}(v, v_i)}{k}$. Here $v_i$ is the $i^{th}$ node predicted to have edge with node $v$ and  $\mathds{1}(v, v_i) = 1$ if $(v, v_1)$ is in the ground truth, $0$ otherwise. Then we compute the Average Precision as  $\text{AP}(v) = \frac{\sum_{i} \text{Precision@i}(v)\cdot \mathds{1}(v, v_i)}{\sum_{i}\mathds{1}(v, v_i)}$.  Finally, MAP is given as:
\begin{equation*}
\text{MAP} = \frac{\sum_{v \in Q} \text{AP}(v)}{|Q|}
\end{equation*}

We compare our result with \nodevec only as it was previously shown to be better than other baselines \cite{grover2016node2vec}.

\par \noindent 
\textbf{Results.} See  Table \ref{tab:linkPredict}. Firstly, note that \alg outperforms \nodevec as $P$ varies from 10 to 30 in all the datasets.  We also notice that  \alg DM performs surprisingly worse than \nodevec and \alg DBON on \facebook. The reason for its poor performance in \facebook is that the network is dense with average clustering co-efficient of 0.6 and effective radius of 4 for 90\% of the nodes. Recall that the \alg DM optimization relies on finding the embedding of the nodes as well, which will not be discriminative for dense networks. In contrast, \alg DBON learns the features of subgraps directly, without relying on node embeddings, and hence it performs very well on large dense networks including \facebook. Finally we see that \nodevec consistently improves as $P$ increases, while both versions of \alg either deteriorate or stagnate. We discuss this more in Section~\ref{sec:conclusions}.

\begin{table*}[htb]
\centering
\caption{\textbf{Mean Average Precision for the link prediction task. P is the percentage of edge removed from the network and S stands for \alg. Winners have been bolded for each dataset. Either \alg DM or \alg DBON outperform \nodevec across all the datasets.}}

\label{tab:linkPredict}
\begin{tabular}{|c ? c|c|c ? c|c|c ? c|c|c? c|c|c?}
\hline &\multicolumn{3}{ c?}{\workplace} &\multicolumn{3}{c?}{\highschool} & \multicolumn{3}{c?}{\facebook} & \multicolumn{3}{c?}{\arxiv} \\ \hline

 $P$  & \nodevec & S DBON & S DM & \nodevec & S DBON & S DM &\nodevec & S DBON &S DM & \nodevec & S DBON & S DM \\ \hline 
 10&  0.25& \textbf{0.37} & 0.33 & 0.39 & 0.42 & \textbf{0.52} & 0.50 & \textbf{0.77} & 0.29& 0.12 & 0.24 & \textbf{0.31}\\ \hline
 20&  0.36& 0.28 & \textbf{0.42} & 0.41 & \textbf{0.52} & 0.26 & 0.68 & \textbf{0.84} & 0.34& 0.21 & \textbf{0.31} & 0.28 \\ \hline
 30&  0.39& 0.28 & \textbf{0.40} & 0.50 & 0.45 & \textbf{0.57} & 0.72 & \textbf{0.83} & 0.35& 0.26 & 0.37 & \textbf{0.44}\\ \hline

\end{tabular}
\end{table*}

\subsection{Parameter Sensitivity}

\begin{figure}[ht]
\begin{center}
  \begin{tabular}{c|c}
  \includegraphics[width=0.2\textwidth]{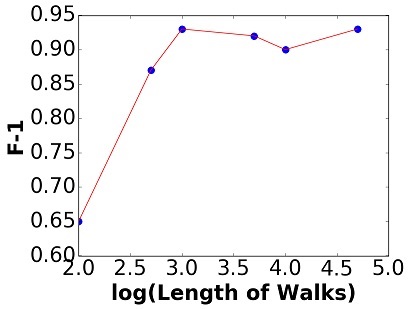} &
      \includegraphics[width=0.2\textwidth]{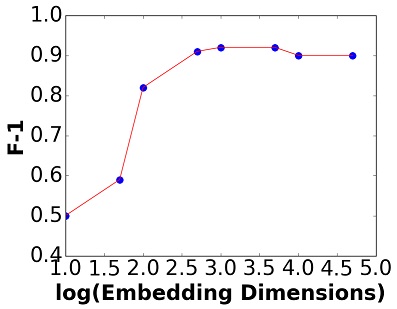}  \\
      (a) Walk length  & (b) Dimension of Vectors 
    \end{tabular}
    \vspace{-0.15in}
\caption{F-1 score on \polblogs for various values of walk length and dimension of embeddings.}

\label{sensitivity}

\end{center}
\end{figure}

Here we discuss the parameter sensitivity of \alg. We show how the F-1 score for community detection task on \polblogs dataset changes when we change the two parameters of \alg: (i) length of the random walk and (ii) dimension of the embedding. As shown in Figure \ref{sensitivity} (a), the F-1 score is 0.85 even when we do random walks of length 500. For the higher length,  the F-1 score remains constant. 

Similarly, to see how the results of the community detection task changes with the size of the embedding, we run the community detection task on \polblogs with varying embedding dimension. See  Figure  \ref{sensitivity} (b). The F-1 score saturates when the dimension of vector is greater than 100.

\subsection{Scalability}
\begin{figure}[ht]
\begin{center}
  \begin{tabular}{c|c}
  \includegraphics[width=0.2\textwidth]{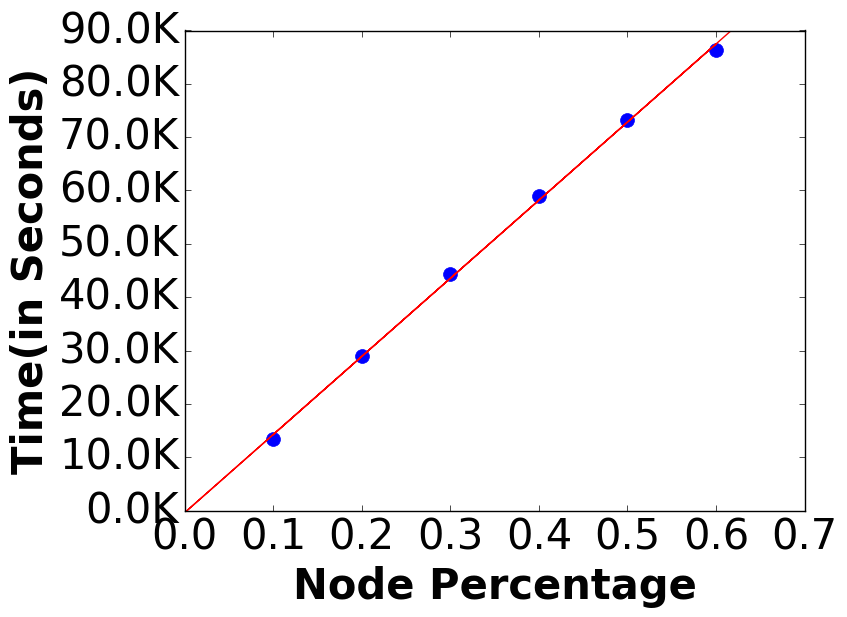} &
      \includegraphics[width=0.2\textwidth]{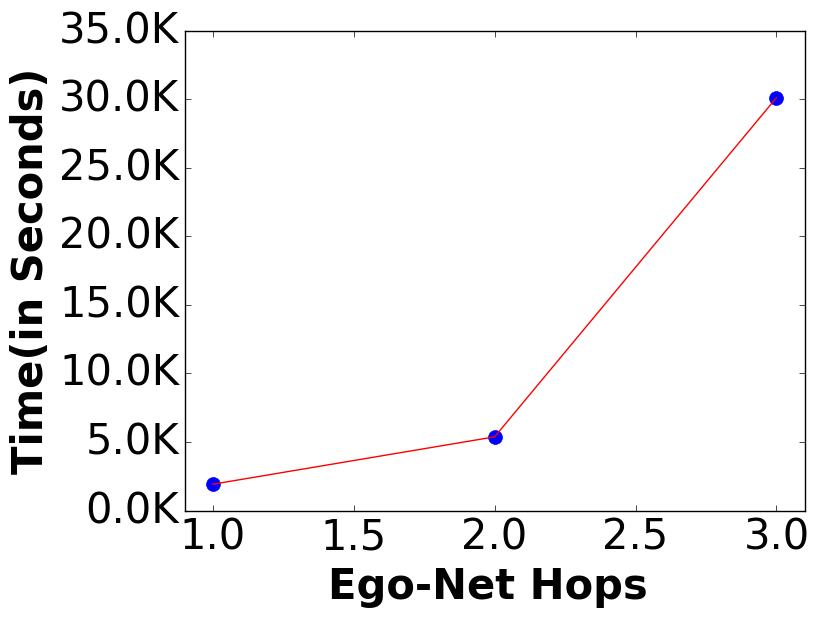}  \\
      (a) No of Subgraphs  & (b) Size of Subgraphs 
    \end{tabular}
    \vspace{-0.15in}
\caption{Scalability w.r.t. number of subgraphs on \youtube and w.r.t size of subgraphs on \arxiv datasets. }
\label{scalability}
\end{center}
\end{figure}

Here we show the scalability of \alg with respect to the number and the size of subgraphs. We extract connected subgraphs of \youtube dataset of induced by varying percentage of nodes. We then run \alg on the set of ego-nets in each resulting network. As shown in Figure \ref{scalability} (a), \alg is linear w.r.t number of subgraphs. 
In Figure \ref{scalability} (b), we run \alg on 1 to 3 hops ego-nets of \arxiv dataset. We see a significant jump in the running time when the hop increases from 2 to 3. This is due to the fact that as the hop of ego-net increases, the size of the subgraph increases exponentially due to the low diameter of real world networks.	

\subsection{Case Studies}
We perform case-studies on \memetracker\footnote{{snap.stanford.edu}}  and \dblp to investigate if our embeddings are interpretable. \memetracker consists of a series of cascades caused by memes spreading on the network of linked web pages. Each meme-cascade induces a subgraph in the underlying network. We first embed these subgraphs in a continuous vector space by leveraging \alg. We then cluster these vectors to explore what kind of meme cascade-graphs are grouped together, what characteristics of memes determine their similarity and distance to each other and so on. For this case-study, we pick the top 1000 memes by volume in the data. And we cluster them into 10 clusters using K-Means.

 We find coherent clusters which are meaningful groupings of memes based on topics. For example we find cluster of memes related to different topics such as  entertainment, politics, religion, technology and so on. Visualization of these clusters is presented in Figure \ref{fig:memetrackerCase}. In the entertainment cluster, we find memes which are names of popular songs and movies such as ``sweet home alabama'',``somewhere over the rainbow'', ``Madagascar 2'' and so on. Similarly, we also find a cluster of religious memes. These memes are quotes from the Bible. We also find memes related to politics and religion in the same cluster such as ``separation of church and state'''. In politics cluster, we find popular quotes from the 2008 presidential election season e.g. Barack Obama's popular slogan ``yes we can'' along with his controversial quotes like ``you can put lipstick on a pig'' in the cluster. We also find Sarah Palin's quote like ``the chant is drill baby drill''. Similarly, we also find a cluster of technology/video games related memes. 
 
 Interestingly, we find that all the memes in Spanish language were clustered together. This indicates that memes in different language travel though separate websites, which matches with the reality as most webpages use one primary language. We also noticed that some of the clusters did not belong to any particular topic. Upon closer examination we found out that these clusters contained memes which were covered by general news website such as msnbc.com, yahoo.com, news.google.com and local news websites such as philly.com from Philadelphia and breakingnews.ie from Ireland.

\begin{figure*}[htb]
\begin{center}

\begin{tabularx}{\textwidth}{>{\centering\arraybackslash}X>{\centering\arraybackslash}X>{\centering\arraybackslash}X}
		\includegraphics[width=0.4\textwidth]{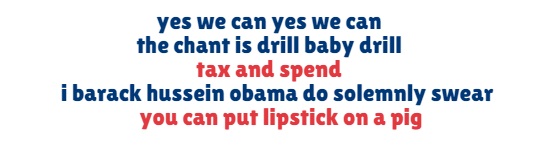} &
       \includegraphics[width=0.4\textwidth]{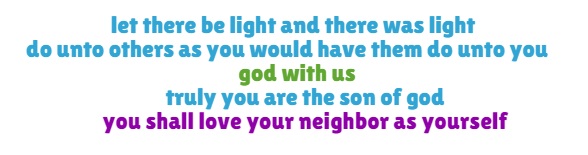} &
       \includegraphics[width=0.4\textwidth]{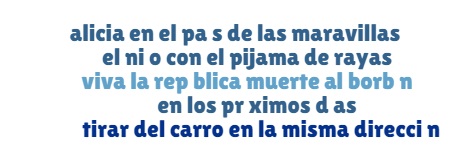} \\
      (a) Politics Cluster  & (b) Religion Cluster & (c) Spanish Cluster 
    \end{tabularx}
\end{center}

\begin{center}

\begin{tabularx}{\textwidth}{>{\centering\arraybackslash}X>{\centering\arraybackslash}X}
       \includegraphics[width=0.4\textwidth]{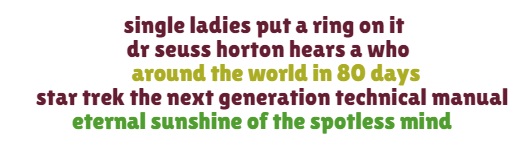} &
       \includegraphics[width=0.4\textwidth]{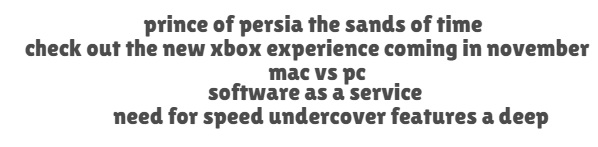} \\
      (d) Entertainment Cluster & (e) Technology Cluster 
    \end{tabularx}
\end{center}
	\vspace{-0.1in}
\caption{\textbf{Different Clusters of Memes for the \memetracker dataset.}}
\label{fig:memetrackerCase}
	\vspace{-0.1in}
\end{figure*}

For \dblp, we follow the methodology in~\cite{lappas2010finding}, and extract subgraphs of the coauthorship network based on the keywords contained in the title of the papers. 

We include keywords such as `classification', `clustering', `xml', and so on. Once we extract the subgraphs, we run \alg to learn embedding of these subgraphs. We then project the embeddings down to 2-dimensions using t-SNE~\cite{maaten2008visualizing}.

See Figure~\ref{fig:casedblp}. We see some meaningful groupings in the plot. We see that the keyword related to each other such as  `graphs', `pagerank', `crawling',  and `clustering' appear together. The classification related keywords such as `boosting', `svm', and `classification' are grouped together. We also see that `streams' and `wavelets' are close to each other. These meaningful groups of keywords highlight the fact that \alg results in meaningful embeddings.

\begin{figure}[htb]
	\includegraphics[width=0.5\textwidth]{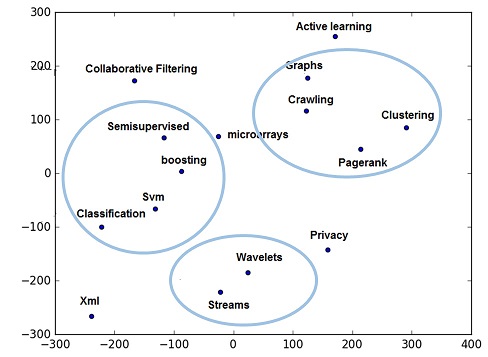}
		\vspace{-0.2in}
	\caption{2D projection of feature vectors learnt by \alg of subgraphs of \dblp induced by different keywords. }
	\label{fig:casedblp}
	\vspace{-0.2in}
\end{figure} 

\section{Related Work}
\label{sec:related}
\noindent \textbf{Network Embedding.}
The network embedding problem has been well studied. Most of work seeks to generate  low dimensional feature representation of nodes. Early work includes Laplacian Eigenmap~\cite{belkin2001laplacian}, IsoMap~\cite{tenenbaum2000global}, locally linear embedding~\cite{roweis2000nonlinear}, and spectral techniques~\cite{bach2003learning, chung1997spectral}.
Recently, several deep learning based network embeddings algorithms were proposed to learn feature representations of nodes~\cite{perozzi2014deepwalk, wang2016structural, tang2015line, grover2016node2vec}.
 Perozzi et. al \cite{perozzi2014deepwalk} proposed DeepWalk, which extends skip-Gram model~\cite{mikolov2013distributed} to networks and learns feature representation based on contexts generated by random walks. 
 Grover et. al. proposed a more general method, Node2Vec \cite{grover2016node2vec}, which generalizes random walks to generate various contexts. SDNE \cite{wang2016structural} and LINE \cite{tang2015line} learn feature representation of nodes while preserving first and second order proximity. However, all of them learn low dimensional feature vector of nodes, while our goal is to embed subgraphs.

 The most similar network embedding literature includes  \cite{riesen2010graph, yanardag2015deep, narayanan2016subgraph2vec}.
 Risen and Bunke propose to learn vector representations of graphs based on edit distance to a set of pre-defined prototype graphs~\cite{riesen2010graph}. Yanardag et. al. \cite{yanardag2015deep}  and  Narayanan et al. \cite{narayanan2016subgraph2vec} learn vector representation of the subgraphs using the Word2Vec~\cite{mikolov2013distributed} by 
  generating ''corpus'' of subgraphs  where each subgraph is treated as a word. 
 The above work focuses on some specific subgraphs like graphlets and rooted subgraphs. None of them embed subgraphs with arbitrary structure. In addition, we interpret subgraphs as paragraphs, and leverage the PV-DBOW and PV-DM model \cite{le2014distributed}.

\noindent \textbf{Other Subgraph Problems.}  
There has been a lot of work on subgraph related problems. For example, the subgraph discovery problems have been studies extensively. Finding the largest clique is a well-known NP-complete problem~\cite{karp1972reducibility}, which  is also hard to approximate~\cite{hstad1996clique}.  Lee et al. surveyed dense subgraph discovery algorithms for several subgraphs including clique, K-core, K-club, etc~\cite{lee2010survey}. Akoglu et al. extended the subgraph discovery problem to attributed graphs~\cite{akoglu2012pics}.
 Perozzi et al. studied the attributed graph anomaly detection by exploring the neighborhood subgraph of a nodes~\cite{perozzi2016scalable}.
Different from the above works, we seek to find feature representations of subgraphs.

\section{Discussion}
\label{sec:conclusions}
\begin{table*}[htb]
\centering
\caption{\textbf{Mean Average Precision for the link prediction task. $P$ is the percentage of edge removed and S stands for \alg.}}
\label{tab:linkPredicthigh}

\begin{tabular}{|c ? c|c|c ? c|c|c ? c|c|c? c|c|c?}
\hline &\multicolumn{3}{ c?}{\workplace} &\multicolumn{3}{c?}{\highschool} & \multicolumn{3}{c?}{\facebook} & \multicolumn{3}{c?}{\arxiv} \\ \hline

 $P$  & \nodevec & S DBON & S DM & \nodevec & S DBON & S DM &\nodevec & S DBON &S DM & \nodevec & S DBON & S DM \\ \hline 
  
 40&  0.45& 0.32 & 0.35 & 0.60 & 0.47 & 0.56 & 0.75 & 0.78 & 0.22& 0.30 & 0.39 & 0.33 \\ \hline
 50&  0.48 & 0.31 & 0.33 & 0.57 & 0.42 & 0.49 & 0.78 & 0.75 & 0.12& 0.33 & 0.26 & 0.34 \\ \hline
 60&  0.50& 0.33 & 0.32 & 0.60 & 0.40 & 0.43 & 0.79 & 0.53 & 0.1& 0.34 & 0.29 & 0.29\\ \hline
\end{tabular}
\end{table*}

We have shown that \alg gives meaningful interpretable embeddings of arbitrary subgraphs. We have also shown via our experiments that \alg outperforms traditional algorithms as well as node-level embedding algorithms for extracting communities from networks, especially in challenging dense graphs. Similarly for link prediction, we also showed that embedding neighborhoods is better for finding correct links.

So for which tasks will \alg not be ideal? For link prediction, as previously mentioned in Section \ref{sec:experiments}, the performance of \alg deteriorates when higher percentages of edges are removed from the network. The results for higher percentages, $P$ = 40 to 60, is presented in Table \ref{tab:linkPredicthigh}. The result shows that \nodevec outperforms \alg in such cases, despite performing poorly for lower values of $P$. 
This happens because, as $P$ increases, the density of the network decreases and results in lesser overlaps in the neighborhoods of nearby nodes. Hence \alg which preserves the local proximity of subgraphs, does not embed such subgraphs very close to each other, resulting in poorer prediction performance.
 
We believe, in such situations, perhaps using other proximity measures between subgraphs is more meaningful to preserve during the embedding process than only local proximity. 

One such way can be using `positional promixity', where two subgraphs are proximal based on their \emph{position} in the network. For example, in Figure \ref{fig:toy}, subgraphs induced by nodes  $\{c, d, e\}$ and $\{g, h, j\}$ are similar to each other as the member nodes in these two subgraphs have similar roles. Nodes $e$ and $h$ both connect to central node $f$ and nodes $d$ and $g$ both have degree two. Using just local proximity, these subgraphs are not similar. 
\par \noindent 
\textbf{Positional Proximity:} If we are given two subgraphs $g_i(v_i, e_1)$ and $g_j(v_j, e_j)$, then the \textbf{positional proximity} between $g_i$ and $g_j$ is determined by similarity of position of nodes in $g_i$ and $g_j$.

Similarly, another way can be using similarity based on \emph{structure} of subgraphs. For example, in Figure \ref{fig:toy}, subgraphs induced by nodes $\{a, b, c, e \}$ and $\{h, i, j, k\}$ are similar to each other as both of them are cliques of size four.

\par \noindent \textbf{Structural Proximity:} If we are given two subgraphs $g_i(v_i, e_1)$ and $g_j(v_j, e_j)$, then the \textbf{structural proximity} between $g_i$ and $g_j$ is determined by the structural properties of $g_i$ and $g_j$. 

For link prediction in very sparse networks, Positional Proximity might give more useful embeddings than Local Proximity. 
We leave the task of embedding subgraphs based on Structural and Positional proximities (or using a combination with Local proximity) and leveraging them for graph mining as future work.

\section{Conclusion}
We have presented \alg, a scalable feature learning framework for a set of subgraphs such that the local proximity between them are preserved. In contrast most prior work focused on finding node-level embeddings. 
We give a theoretical justification and showed that the embeddings generated by \alg can be leveraged in downstream applications such as community detection and link prediction. We also performed case-studies on two real networks to validate the usefulness of the subgraph features generated by \alg.

\bibliographystyle{ACM-Reference-Format}
\bibliography{all,references,related}


\providecommand{\noopsort}[1]{}
\begin{thebibliography}{00}


\ifx \showCODEN    \undefined \def \showCODEN     #1{\unskip}     \fi
\ifx \showDOI      \undefined \def \showDOI       #1{{\tt DOI:}\penalty0{#1}\ }
  \fi
\ifx \showISBNx    \undefined \def \showISBNx     #1{\unskip}     \fi
\ifx \showISBNxiii \undefined \def \showISBNxiii  #1{\unskip}     \fi
\ifx \showISSN     \undefined \def \showISSN      #1{\unskip}     \fi
\ifx \showLCCN     \undefined \def \showLCCN      #1{\unskip}     \fi
\ifx \shownote     \undefined \def \shownote      #1{#1}          \fi
\ifx \showarticletitle \undefined \def \showarticletitle #1{#1}   \fi
\ifx \showURL      \undefined \def \showURL       #1{#1}          \fi
\providecommand\bibfield[2]{#2}
\providecommand\bibinfo[2]{#2}
\providecommand\natexlab[1]{#1}
\providecommand\showeprint[2][]{arXiv:#2}

\bibitem[\protect\citeauthoryear{Adamic and Glance}{Adamic and Glance}{2005}]%
        {adamic2005political}
\bibfield{author}{\bibinfo{person}{Lada~A Adamic} {and}
  \bibinfo{person}{Natalie Glance}.} \bibinfo{year}{2005}\natexlab{}.
\newblock \showarticletitle{The political blogosphere and the 2004 US election:
  divided they blog}. In \bibinfo{booktitle}{{\em Proceedings of the 3rd
  international workshop on Link discovery}}. ACM, \bibinfo{pages}{36--43}.
\newblock


\bibitem[\protect\citeauthoryear{Akoglu, Tong, Meeder, and Faloutsos}{Akoglu
  et~al\mbox{.}}{2012}]%
        {akoglu2012pics}
\bibfield{author}{\bibinfo{person}{Leman Akoglu}, \bibinfo{person}{Hanghang
  Tong}, \bibinfo{person}{Brendan Meeder}, {and} \bibinfo{person}{Christos
  Faloutsos}.} \bibinfo{year}{2012}\natexlab{}.
\newblock \showarticletitle{PICS: Parameter-free identification of cohesive
  subgroups in large attributed graphs}. In \bibinfo{booktitle}{{\em
  Proceedings of the 2012 SIAM international conference on data mining}}. SIAM,
  \bibinfo{pages}{439--450}.
\newblock


\bibitem[\protect\citeauthoryear{Bach and Jordan}{Bach and Jordan}{2003}]%
        {bach2003learning}
\bibfield{author}{\bibinfo{person}{Francis~R Bach} {and}
  \bibinfo{person}{Michael~I Jordan}.} \bibinfo{year}{2003}\natexlab{}.
\newblock \showarticletitle{Learning spectral clustering}. In
  \bibinfo{booktitle}{{\em NIPS}}, Vol.~\bibinfo{volume}{16}.
\newblock


\bibitem[\protect\citeauthoryear{Belkin and Niyogi}{Belkin and Niyogi}{2001}]%
        {belkin2001laplacian}
\bibfield{author}{\bibinfo{person}{Mikhail Belkin} {and}
  \bibinfo{person}{Partha Niyogi}.} \bibinfo{year}{2001}\natexlab{}.
\newblock \showarticletitle{Laplacian eigenmaps and spectral techniques for
  embedding and clustering}. In \bibinfo{booktitle}{{\em NIPS}},
  Vol.~\bibinfo{volume}{14}. \bibinfo{pages}{585--591}.
\newblock


\bibitem[\protect\citeauthoryear{Bhagat, Cormode, and Muthukrishnan}{Bhagat
  et~al\mbox{.}}{2011}]%
        {bhagat2011node}
\bibfield{author}{\bibinfo{person}{Smriti Bhagat}, \bibinfo{person}{Graham
  Cormode}, {and} \bibinfo{person}{S Muthukrishnan}.}
  \bibinfo{year}{2011}\natexlab{}.
\newblock \showarticletitle{Node classification in social networks}.
\newblock In \bibinfo{booktitle}{{\em Social network data analytics}}.
  \bibinfo{publisher}{Springer}, \bibinfo{pages}{115--148}.
\newblock


\bibitem[\protect\citeauthoryear{Blondel, Guillaume, Lambiotte, and
  Lefebvre}{Blondel et~al\mbox{.}}{2008}]%
        {blondel2008fast}
\bibfield{author}{\bibinfo{person}{Vincent~D Blondel},
  \bibinfo{person}{Jean-Loup Guillaume}, \bibinfo{person}{Renaud Lambiotte},
  {and} \bibinfo{person}{Etienne Lefebvre}.} \bibinfo{year}{2008}\natexlab{}.
\newblock \showarticletitle{Fast unfolding of communities in large networks}.
\newblock \bibinfo{journal}{{\em Journal of statistical mechanics: theory and
  experiment\/}} \bibinfo{volume}{2008}, \bibinfo{number}{10}
  (\bibinfo{year}{2008}), \bibinfo{pages}{P10008}.
\newblock


\bibitem[\protect\citeauthoryear{Chung}{Chung}{1997}]%
        {chung1997spectral}
\bibfield{author}{\bibinfo{person}{Fan~RK Chung}.}
  \bibinfo{year}{1997}\natexlab{}.
\newblock \bibinfo{booktitle}{{\em Spectral graph theory}}.
  Vol.~\bibinfo{volume}{92}.
\newblock \bibinfo{publisher}{American Mathematical Soc.}
\newblock


\bibitem[\protect\citeauthoryear{Fournet and Barrat}{Fournet and
  Barrat}{2014}]%
        {10.1371/journal.pone.0107878}
\bibfield{author}{\bibinfo{person}{Julie Fournet} {and} \bibinfo{person}{Alain
  Barrat}.} \bibinfo{year}{2014}\natexlab{}.
\newblock \showarticletitle{Contact Patterns among High School Students}.
\newblock \bibinfo{journal}{{\em PLoS ONE\/}} \bibinfo{volume}{9},
  \bibinfo{number}{9} (\bibinfo{date}{09} \bibinfo{year}{2014}),
  \bibinfo{pages}{e107878}.
\newblock


\bibitem[\protect\citeauthoryear{Genois, Vestergaard, Fournet, Panisson,
  Bonmarin, and Barrat}{Genois et~al\mbox{.}}{2015}]%
        {NWS:9950811}
\bibfield{author}{\bibinfo{person}{Mathieu Genois}, \bibinfo{person}{Christian
  Vestergaard}, \bibinfo{person}{Julie Fournet}, \bibinfo{person}{Andre
  Panisson}, \bibinfo{person}{Isabelle Bonmarin}, {and} \bibinfo{person}{Alain
  Barrat}.} \bibinfo{year}{2015}\natexlab{}.
\newblock \showarticletitle{Data on face-to-face contacts in an office building
  suggest a low-cost vaccination strategy based on community linkers}.
\newblock \bibinfo{journal}{{\em Network Science\/}}  \bibinfo{volume}{3}
  (\bibinfo{date}{9} \bibinfo{year}{2015}), \bibinfo{pages}{326--347}.
\newblock
Issue 03.
\showISSN{2050-1250}


\bibitem[\protect\citeauthoryear{Girvan and Newman}{Girvan and Newman}{2002}]%
        {girvan2002community}
\bibfield{author}{\bibinfo{person}{Michelle Girvan} {and}
  \bibinfo{person}{Mark~EJ Newman}.} \bibinfo{year}{2002}\natexlab{}.
\newblock \showarticletitle{Community structure in social and biological
  networks}.
\newblock \bibinfo{journal}{{\em Proceedings of the national academy of
  sciences\/}} \bibinfo{volume}{99}, \bibinfo{number}{12}
  (\bibinfo{year}{2002}), \bibinfo{pages}{7821--7826}.
\newblock


\bibitem[\protect\citeauthoryear{Grover and Leskovec}{Grover and
  Leskovec}{2016}]%
        {grover2016node2vec}
\bibfield{author}{\bibinfo{person}{Aditya Grover} {and} \bibinfo{person}{Jure
  Leskovec}.} \bibinfo{year}{2016}\natexlab{}.
\newblock \showarticletitle{node2vec: Scalable feature learning for networks}.
  In \bibinfo{booktitle}{{\em Proceedings of the 22nd ACM SIGKDD International
  Conference on Knowledge Discovery and Data Mining}}. ACM,
  \bibinfo{pages}{855--864}.
\newblock


\bibitem[\protect\citeauthoryear{Hstad}{Hstad}{1996}]%
        {hstad1996clique}
\bibfield{author}{\bibinfo{person}{Johan Hstad}.}
  \bibinfo{year}{1996}\natexlab{}.
\newblock \showarticletitle{Clique is hard to approximate within n1}. In
  \bibinfo{booktitle}{{\em Proc. 37th Symp. on Found. Comput. Sci}}.
  \bibinfo{pages}{627--636}.
\newblock


\bibitem[\protect\citeauthoryear{Karp}{Karp}{1972}]%
        {karp1972reducibility}
\bibfield{author}{\bibinfo{person}{Richard~M Karp}.}
  \bibinfo{year}{1972}\natexlab{}.
\newblock \showarticletitle{Reducibility among combinatorial problems}.
\newblock In \bibinfo{booktitle}{{\em Complexity of computer computations}}.
  \bibinfo{publisher}{Springer}, \bibinfo{pages}{85--103}.
\newblock


\bibitem[\protect\citeauthoryear{Lappas, Terzi, Gunopulos, and Mannila}{Lappas
  et~al\mbox{.}}{2010}]%
        {lappas2010finding}
\bibfield{author}{\bibinfo{person}{Theodoros Lappas}, \bibinfo{person}{Evimaria
  Terzi}, \bibinfo{person}{Dimitrios Gunopulos}, {and} \bibinfo{person}{Heikki
  Mannila}.} \bibinfo{year}{2010}\natexlab{}.
\newblock \showarticletitle{Finding effectors in social networks}. In
  \bibinfo{booktitle}{{\em Proceedings of the 16th ACM SIGKDD international
  conference on Knowledge discovery and data mining}}. ACM,
  \bibinfo{pages}{1059--1068}.
\newblock


\bibitem[\protect\citeauthoryear{Le and Mikolov}{Le and Mikolov}{2014}]%
        {le2014distributed}
\bibfield{author}{\bibinfo{person}{Quoc~V Le} {and} \bibinfo{person}{Tomas
  Mikolov}.} \bibinfo{year}{2014}\natexlab{}.
\newblock \showarticletitle{Distributed Representations of Sentences and
  Documents.}. In \bibinfo{booktitle}{{\em ICML}}, Vol.~\bibinfo{volume}{14}.
  \bibinfo{pages}{1188--1196}.
\newblock


\bibitem[\protect\citeauthoryear{Lee, Ruan, Jin, and Aggarwal}{Lee
  et~al\mbox{.}}{2010}]%
        {lee2010survey}
\bibfield{author}{\bibinfo{person}{Victor~E Lee}, \bibinfo{person}{Ning Ruan},
  \bibinfo{person}{Ruoming Jin}, {and} \bibinfo{person}{Charu Aggarwal}.}
  \bibinfo{year}{2010}\natexlab{}.
\newblock \showarticletitle{A survey of algorithms for dense subgraph
  discovery}.
\newblock In \bibinfo{booktitle}{{\em Managing and Mining Graph Data}}.
  \bibinfo{publisher}{Springer}, \bibinfo{pages}{303--336}.
\newblock


\bibitem[\protect\citeauthoryear{Leskovec, Kleinberg, and Faloutsos}{Leskovec
  et~al\mbox{.}}{2007}]%
        {leskovec2007graph}
\bibfield{author}{\bibinfo{person}{Jure Leskovec}, \bibinfo{person}{Jon
  Kleinberg}, {and} \bibinfo{person}{Christos Faloutsos}.}
  \bibinfo{year}{2007}\natexlab{}.
\newblock \showarticletitle{Graph evolution: Densification and shrinking
  diameters}.
\newblock \bibinfo{journal}{{\em ACM Transactions on Knowledge Discovery from
  Data (TKDD)\/}} \bibinfo{volume}{1}, \bibinfo{number}{1}
  (\bibinfo{year}{2007}), \bibinfo{pages}{2}.
\newblock


\bibitem[\protect\citeauthoryear{Leskovec and Mcauley}{Leskovec and
  Mcauley}{2012}]%
        {leskovec2012learning}
\bibfield{author}{\bibinfo{person}{Jure Leskovec} {and}
  \bibinfo{person}{Julian~J Mcauley}.} \bibinfo{year}{2012}\natexlab{}.
\newblock \showarticletitle{Learning to discover social circles in ego
  networks}. In \bibinfo{booktitle}{{\em Advances in neural information
  processing systems}}. \bibinfo{pages}{539--547}.
\newblock


\bibitem[\protect\citeauthoryear{Levy and Goldberg}{Levy and Goldberg}{2014}]%
        {levy2014neural}
\bibfield{author}{\bibinfo{person}{Omer Levy} {and} \bibinfo{person}{Yoav
  Goldberg}.} \bibinfo{year}{2014}\natexlab{}.
\newblock \showarticletitle{Neural word embedding as implicit matrix
  factorization}. In \bibinfo{booktitle}{{\em Advances in neural information
  processing systems}}. \bibinfo{pages}{2177--2185}.
\newblock


\bibitem[\protect\citeauthoryear{Liben-Nowell and Kleinberg}{Liben-Nowell and
  Kleinberg}{2007}]%
        {liben2007link}
\bibfield{author}{\bibinfo{person}{David Liben-Nowell} {and}
  \bibinfo{person}{Jon Kleinberg}.} \bibinfo{year}{2007}\natexlab{}.
\newblock \showarticletitle{The link-prediction problem for social networks}.
\newblock \bibinfo{journal}{{\em journal of the Association for Information
  Science and Technology\/}} \bibinfo{volume}{58}, \bibinfo{number}{7}
  (\bibinfo{year}{2007}), \bibinfo{pages}{1019--1031}.
\newblock


\bibitem[\protect\citeauthoryear{Maaten and Hinton}{Maaten and Hinton}{2008}]%
        {maaten2008visualizing}
\bibfield{author}{\bibinfo{person}{Laurens van~der Maaten} {and}
  \bibinfo{person}{Geoffrey Hinton}.} \bibinfo{year}{2008}\natexlab{}.
\newblock \showarticletitle{Visualizing data using t-SNE}.
\newblock \bibinfo{journal}{{\em Journal of Machine Learning Research\/}}
  \bibinfo{volume}{9}, \bibinfo{number}{Nov} (\bibinfo{year}{2008}),
  \bibinfo{pages}{2579--2605}.
\newblock


\bibitem[\protect\citeauthoryear{Mikolov, Sutskever, Chen, Corrado, and
  Dean}{Mikolov et~al\mbox{.}}{2013}]%
        {mikolov2013distributed}
\bibfield{author}{\bibinfo{person}{Tomas Mikolov}, \bibinfo{person}{Ilya
  Sutskever}, \bibinfo{person}{Kai Chen}, \bibinfo{person}{Greg~S Corrado},
  {and} \bibinfo{person}{Jeff Dean}.} \bibinfo{year}{2013}\natexlab{}.
\newblock \showarticletitle{Distributed representations of words and phrases
  and their compositionality}. In \bibinfo{booktitle}{{\em NIPS}}.
  \bibinfo{pages}{3111--3119}.
\newblock


\bibitem[\protect\citeauthoryear{Narayanan, Chandramohan, Chen, Liu, and
  Saminathan}{Narayanan et~al\mbox{.}}{2016}]%
        {narayanan2016subgraph2vec}
\bibfield{author}{\bibinfo{person}{Annamalai Narayanan},
  \bibinfo{person}{Mahinthan Chandramohan}, \bibinfo{person}{Lihui Chen},
  \bibinfo{person}{Yang Liu}, {and} \bibinfo{person}{Santhoshkumar
  Saminathan}.} \bibinfo{year}{2016}\natexlab{}.
\newblock \showarticletitle{subgraph2vec: Learning distributed representations
  of rooted sub-graphs from large graphs}.
\newblock \bibinfo{journal}{{\em arXiv preprint arXiv:1606.08928\/}}
  (\bibinfo{year}{2016}).
\newblock


\bibitem[\protect\citeauthoryear{Perozzi and Akoglu}{Perozzi and
  Akoglu}{2016}]%
        {perozzi2016scalable}
\bibfield{author}{\bibinfo{person}{Bryan Perozzi} {and} \bibinfo{person}{Leman
  Akoglu}.} \bibinfo{year}{2016}\natexlab{}.
\newblock \showarticletitle{Scalable anomaly ranking of attributed
  neighborhoods}. In \bibinfo{booktitle}{{\em Proceedings of the 2016 SIAM
  International Conference on Data Mining}}. SIAM, \bibinfo{pages}{207--215}.
\newblock


\bibitem[\protect\citeauthoryear{Perozzi, Al-Rfou, and Skiena}{Perozzi
  et~al\mbox{.}}{2014}]%
        {perozzi2014deepwalk}
\bibfield{author}{\bibinfo{person}{Bryan Perozzi}, \bibinfo{person}{Rami
  Al-Rfou}, {and} \bibinfo{person}{Steven Skiena}.}
  \bibinfo{year}{2014}\natexlab{}.
\newblock \showarticletitle{Deepwalk: Online learning of social
  representations}. In \bibinfo{booktitle}{{\em Proceedings of the 20th ACM
  SIGKDD international conference on Knowledge discovery and data mining}}.
  ACM, \bibinfo{pages}{701--710}.
\newblock


\bibitem[\protect\citeauthoryear{Riesen and Bunke}{Riesen and Bunke}{2010}]%
        {riesen2010graph}
\bibfield{author}{\bibinfo{person}{Kaspar Riesen} {and} \bibinfo{person}{Horst
  Bunke}.} \bibinfo{year}{2010}\natexlab{}.
\newblock \bibinfo{booktitle}{{\em Graph classification and clustering based on
  vector space embedding}}.
\newblock \bibinfo{publisher}{World Scientific Publishing Co., Inc.}
\newblock


\bibitem[\protect\citeauthoryear{Roweis and Saul}{Roweis and Saul}{2000}]%
        {roweis2000nonlinear}
\bibfield{author}{\bibinfo{person}{Sam~T Roweis} {and}
  \bibinfo{person}{Lawrence~K Saul}.} \bibinfo{year}{2000}\natexlab{}.
\newblock \showarticletitle{Nonlinear dimensionality reduction by locally
  linear embedding}.
\newblock \bibinfo{journal}{{\em science\/}} \bibinfo{volume}{290},
  \bibinfo{number}{5500} (\bibinfo{year}{2000}), \bibinfo{pages}{2323--2326}.
\newblock


\bibitem[\protect\citeauthoryear{Sen, Namata, Bilgic, Getoor, Gallagher, and
  Eliassi-Rad}{Sen et~al\mbox{.}}{2008}]%
        {sen:aimag08}
\bibfield{author}{\bibinfo{person}{Prithviraj Sen},
  \bibinfo{person}{Galileo~Mark Namata}, \bibinfo{person}{Mustafa Bilgic},
  \bibinfo{person}{Lise Getoor}, \bibinfo{person}{Brian Gallagher}, {and}
  \bibinfo{person}{Tina Eliassi-Rad}.} \bibinfo{year}{2008}\natexlab{}.
\newblock \showarticletitle{Collective Classification in Network Data}.
\newblock \bibinfo{journal}{{\em AI Magazine\/}} \bibinfo{volume}{29},
  \bibinfo{number}{3} (\bibinfo{year}{2008}), \bibinfo{pages}{93--106}.
\newblock


\bibitem[\protect\citeauthoryear{Tang, Qu, Wang, Zhang, Yan, and Mei}{Tang
  et~al\mbox{.}}{2015}]%
        {tang2015line}
\bibfield{author}{\bibinfo{person}{Jian Tang}, \bibinfo{person}{Meng Qu},
  \bibinfo{person}{Mingzhe Wang}, \bibinfo{person}{Ming Zhang},
  \bibinfo{person}{Jun Yan}, {and} \bibinfo{person}{Qiaozhu Mei}.}
  \bibinfo{year}{2015}\natexlab{}.
\newblock \showarticletitle{Line: Large-scale information network embedding}.
  In \bibinfo{booktitle}{{\em Proceedings of the 24th International Conference
  on World Wide Web}}. ACM, \bibinfo{pages}{1067--1077}.
\newblock


\bibitem[\protect\citeauthoryear{Tenenbaum, De~Silva, and Langford}{Tenenbaum
  et~al\mbox{.}}{2000}]%
        {tenenbaum2000global}
\bibfield{author}{\bibinfo{person}{Joshua~B Tenenbaum}, \bibinfo{person}{Vin
  De~Silva}, {and} \bibinfo{person}{John~C Langford}.}
  \bibinfo{year}{2000}\natexlab{}.
\newblock \showarticletitle{A global geometric framework for nonlinear
  dimensionality reduction}.
\newblock \bibinfo{journal}{{\em science\/}} \bibinfo{volume}{290},
  \bibinfo{number}{5500} (\bibinfo{year}{2000}), \bibinfo{pages}{2319--2323}.
\newblock


\bibitem[\protect\citeauthoryear{Wang, Cui, and Zhu}{Wang
  et~al\mbox{.}}{2016}]%
        {wang2016structural}
\bibfield{author}{\bibinfo{person}{Daixin Wang}, \bibinfo{person}{Peng Cui},
  {and} \bibinfo{person}{Wenwu Zhu}.} \bibinfo{year}{2016}\natexlab{}.
\newblock \showarticletitle{Structural deep network embedding}. In
  \bibinfo{booktitle}{{\em Proceedings of the 22nd ACM SIGKDD International
  Conference on Knowledge Discovery and Data Mining}}. ACM,
  \bibinfo{pages}{1225--1234}.
\newblock


\bibitem[\protect\citeauthoryear{Yanardag and Vishwanathan}{Yanardag and
  Vishwanathan}{2015}]%
        {yanardag2015deep}
\bibfield{author}{\bibinfo{person}{Pinar Yanardag} {and} \bibinfo{person}{SVN
  Vishwanathan}.} \bibinfo{year}{2015}\natexlab{}.
\newblock \showarticletitle{Deep graph kernels}. In \bibinfo{booktitle}{{\em
  Proceedings of the 21th ACM SIGKDD International Conference on Knowledge
  Discovery and Data Mining}}. ACM, \bibinfo{pages}{1365--1374}.
\newblock


\bibitem[\protect\citeauthoryear{Yang and Leskovec}{Yang and Leskovec}{2015}]%
        {yang2015defining}
\bibfield{author}{\bibinfo{person}{Jaewon Yang} {and} \bibinfo{person}{Jure
  Leskovec}.} \bibinfo{year}{2015}\natexlab{}.
\newblock \showarticletitle{Defining and evaluating network communities based
  on ground-truth}.
\newblock \bibinfo{journal}{{\em Knowledge and Information Systems\/}}
  \bibinfo{volume}{42}, \bibinfo{number}{1} (\bibinfo{year}{2015}),
  \bibinfo{pages}{181--213}.
\newblock


\end{thebibliography}


\end{document}